\def\showauthornotes{0}
\def\showtableofcontents{0}
\def\showkeys{0}
\def\showdraftbox{0}
\def\showcolorlinks{1}
\def\usemicrotype{1}
\def\showfixme{0}
\newtheorem{theorem}{Theorem}[section]
\newtheorem*{theorem*}{Theorem}
\newtheorem*{proposition*}{Proposition}
\newtheorem{lemma}[theorem]{Lemma}
\newtheorem*{lemma*}{Lemma}
\newtheorem{corollary}[theorem]{Corollary}
\newtheorem*{conjecture*}{Conjecture}
\newtheorem*{fact*}{Fact}
\newtheorem*{hypothesis*}{Hypothesis}
\theoremstyle{definition}
\newtheorem{definition}[theorem]{Definition}
\theoremstyle{remark}
\newtheorem*{claim*}{Claim}
\newtheorem{remark}[theorem]{Remark}
\newtheorem*{remark*}{Remark}
\newtheorem*{observation*}{Observation}
\newcommand{\savehyperref}[2]{\texorpdfstring{\hyperref[#1]{#2}}{#2}}
\newcommand{\Sref}[1]{\hyperref[#1]{\S\ref*{#1}}}
\newcommand{\Authornote}[2]{{\sffamily\small\color{red}{[#1: #2]}}}
\newcommand{\Authornotecolored}[3]{{\sffamily\small\color{#1}{[#2: #3]}}}
\newcommand{\Authorcomment}[2]{{\sffamily\small\color{gray}{[#1: #2]}}}
\newcommand{\Authorstartcomment}[1]{\sffamily\small\color{gray}[#1: }
\newcommand{\Authorfnote}[2]{\footnote{\color{red}{#1: #2}}}
\newcommand{\Authorfixme}[1]{\Authornote{#1}{\textbf{??}}}
\newcommand{\Authormarginmark}[1]{\marginpar{\textcolor{red}{\fbox{\Large #1:!}}}}
\newcommand{\Authornote}[2]{}
\newcommand{\Authornotecolored}[3]{}
\newcommand{\Authorcomment}[2]{}
\newcommand{\Authorstartcomment}[1]{}
\newcommand{\Authorfnote}[2]{}
\newcommand{\Authorfixme}[1]{}
\newcommand{\Authormarginmark}[1]{}
\newcommand{\Esymb}{\mathbb{E}}
\DeclareMathOperator*{\E}{\Esymb}
\newcommand{\textparen}[1]{\text{(#1)}}
\newcommand{\because}[1]{\textparen{because #1}}
\renewcommand{\because}[1]{\textparen{because #1}}
\newcommand{\mcom}{\,,}
\newcommand\bdot\bullet
\DeclareMathOperator{\Ind}{\mathbb{I}}
\DeclareMathOperator{\Ind}{\mathds 1}}
\DeclareMathOperator{\poly}{poly}
\newcommand{\N}{\mathbb N}
\newcommand{\R}{\mathbb R}
\newcommand{\cP}{\mathcal P}
\newcommand{\cQ}{\mathcal Q}
\newcommand{\cS}{\mathcal S}
\newcommand{\draftbox}{\begin{center}
  \fbox{%
    \begin{minipage}{2in}%
      \begin{center}%
          \Large\textsc{Working Draft}\\%
        Please do not distribute%
      \end{center}%
    \end{minipage}%
  }%
\end{center}
\vspace{0.2cm}}
\newcommand{\draftbox}{}
\let\epsilon=\varepsilon
\numberwithin{equation}{section}
\newcommand\MYcurrentlabel{xxx}
\newcommand{\MYstore}[2]{%
  \global\expandafter \def \csname MYMEMORY #1 \endcsname{#2}%
}
\newcommand{\MYload}[1]{%
  \csname MYMEMORY #1 \endcsname%
}
\newcommand{\MYnewlabel}[1]{%
  \renewcommand\MYcurrentlabel{#1}%
  \MYoldlabel{#1}%
}
\newcommand{\MYdummylabel}[1]{}
\newcommand{\torestate}[1]{%
  \let\MYoldlabel\label%
  \let\label\MYnewlabel%
  #1%
  \MYstore{\MYcurrentlabel}{#1}%
  \let\label\MYoldlabel%
}
\newcommand{\restatetheorem}[1]{%
  \let\MYoldlabel\label
  \let\label\MYdummylabel
  \begin{theorem*}[Restatement of \prettyref{#1}]
    \MYload{#1}
  \end{theorem*}
  \let\label\MYoldlabel
}
\newcommand{\restatelemma}[1]{%
  \let\MYoldlabel\label
  \let\label\MYdummylabel
  \begin{lemma*}[Restatement of \prettyref{#1}]
    \MYload{#1}
  \end{lemma*}
  \let\label\MYoldlabel
}
\newcommand{\restateprop}[1]{%
  \let\MYoldlabel\label
  \let\label\MYdummylabel
  \begin{proposition*}[Restatement of \prettyref{#1}]
    \MYload{#1}
  \end{proposition*}
  \let\label\MYoldlabel
}
\newcommand{\restatefact}[1]{%
  \let\MYoldlabel\label
  \let\label\MYdummylabel
  \begin{fact*}[Restatement of \prettyref{#1}]
    \MYload{#1}
  \end{fact*}
  \let\label\MYoldlabel
}
\newcommand{\restate}[1]{%
  \let\MYoldlabel\label
  \let\label\MYdummylabel
  \MYload{#1}
  \let\label\MYoldlabel
}
\newcommand{\addreferencesection}{
  \phantomsection
  \addcontentsline{toc}{section}{References}
}
\let\origparagraph\paragraph
\renewcommand{\paragraph}[1]{\origparagraph{#1.}}
\let\citet\cite
\theoremstyle{definition}
\DeclareUrlCommand\email{}
\title{On the Bit Complexity of Sum-of-Squares Proofs}
\author{%
\normalsize
Prasad Raghavendra\thanks{UC Berkeley, \protect\email{raghavendra@berkeley.edu}. Supported by NSF Career Award, NSF CCF-1407779 and the Alfred. P. Sloan Fellowship.}
\and
Ben Weitz\thanks{UC Berkeley, \protect\email{bsweitz@cs.berkeley.edu}. Supported by an NSF Graduate Research Fellowship (NSF award no DGE 1106400).}
}
\begin{document}

\maketitle

\draftbox

\thispagestyle{empty}

\begin{abstract}
	It has often been claimed in recent papers that one can find a degree $d$ Sum-of-Squares proof if one exists via the Ellipsoid algorithm. In \cite{odonnell17}, Ryan O'Donnell notes this widely quoted claim is not necessarily true. He presents an example of a polynomial system with bounded coefficients that admits low-degree proofs of non-negativity, but these proofs necessarily involve numbers with an exponential number of bits, causing the Ellipsoid algorithm to take exponential time. In this paper we obtain both positive and negative results on the bit complexity of SoS proofs. 

	First, we propose a sufficient condition on a polynomial system that implies a bound on the coefficients in an SoS proof.  We demonstrate that this sufficient condition is applicable for common use-cases of the SoS algorithm, such as \textsc{Max-CSP}, \textsc{Balanced Separator}, \textsc{Max-Clique}, \textsc{Max-Bisection}, and \textsc{Unit-Vector} constraints. 
	
	On the negative side, O'Donnell asked whether every polynomial system containing Boolean constraints admits proofs of polynomial bit complexity. We answer this question in the negative, giving a counterexample system and non-negative polynomial which has degree two SoS proofs, but no SoS proof with small coefficients until degree $\Omega(\sqrt{n})$.
\end{abstract}

\clearpage

\ifnum\showtableofcontents=1
{
\tableofcontents
\thispagestyle{empty}
 }
\fi

\clearpage

\setcounter{page}{1}

\section{Introduction}
\label{sec:intro}

The Sum of squares (SoS) proof system is a versatile and powerful approach to certifying polynomial inequalities. 
SoS certificates can be shown to underly a vast number of algorithms in combinatorial optimization. 
On the one hand, SoS certificates hold the promise of yielding algorithms that possibly refute the notorious unique games conjecture \cite{BBHKSZ12, barak2011rounding,GuruswamiS11}.  
On the other hand, a flurry of recent works have applied SoS proofs to develop algorithms for problems ranging from constraint satisfaction problems to tensor problems.  

To illustrate sum of squares certificates, let us consider the example of the
\textsc{Balanced Separator} problem.  Here we are given a graph $G = (V,E)$ and the goal is to find a balanced cut $(S,\overline{S})$ with the minimum number of crossing edges.
Like many problems in combinatorial optimization, it can be reformulated as a low-degree polynomial optimization problem.  
Specifically if we associate $\{0,1\}$ variables $\{x_1,\ldots,x_n\}$ for the vertices of the graph $G$ then we can rewrite the \textsc{Balanced Separator} problem as follows,
\begin{align*}
\text{ Minimize } \ \ \  \sum_{(i,j) \in E} (x_i-x_j)^2  \ \ \ \ \text{ subject to } \left\{ x_i^2 = x_i \forall i\ , \ \frac{n}{3} \leq \sum_i x_i \leq \frac{2n}{3} \right\}
\end{align*} 
Here the constraint $x_i^2 = x_i$ ensures $x_i \in \{0,1\}$ while the inequalities enforce the condition that the cut is balanced. More generally, a low-degree polynomial optimization is of the form
\begin{align*}
&\text{ Minimize } r(x) \text{ subject to } \\ &\text{ equalities } \cP = \left\{ p_i(x) = 0 | i \in [n] \right\} \text{ and inequalities } \cQ = \left\{ q_i(x) \geq 0 | i \in [m] \right\}
\end{align*} 

An SoS certificate of a lower bound $ r(x) \geq \theta$ is given by a polynomial identity of the form
\[ r(x) - \theta  = \sum_{i} h_i(x)^2 + \sum_{i \in [m]} \left(\sum_{j}^{t_i} s_j^2(x) \right) \cdot q_i(x) + \sum_{i \in [n]} \lambda_i(x) p_i(x) \]
Notice that for all $x$ satisfying the equalities $\cP$ and the inequalities $\cQ$, the right hand side of the above identity is manifestly non-negative, thereby certifying that $r(x) \geq \theta$.  The degree of the SoS certificate is the maximum degree of the polynomials involved, i.e., $d = \max\{\deg h_i^2, \deg s_j^2 q_i, \deg \lambda_i p_i\}$.

The main appeal of SoS certificates for polynomial optimization is that the existence of a degree $d$ SoS certificate can be formulated as the feasibility of a semidefinite program (SDP).
This is the degree $d$ SoS relaxation first introduced by Shor \cite{shor1987class}, and expanded upon by later works of Nesterov \cite{nesterov2000squared}, Grigoriev and Vorobjov \cite{grigoriev2001complexity}, Lasserre \cite{lasserre2000optimisation,lasserre2001global}  and Parrilo \cite{parrilo2000structured}.
(see, e.g., \cite{laurent2009sums,barak2014sum} for many more details).

The degree $d$ SoS SDP has $n^{O(d)}$ variables, and if the coefficients of $p$ and $q$ are reasonably bounded (smaller than $2^{n^{O(d)}}$), the resulting SDP has a compact description of size $n^{O(d)}$.
From this, several works including those by the authors, asserted that the resulting feasibility SDP can be solved in time $n^{O(d)}$ using the Ellipsoid algorithm.

In a recent work, O'Donnell \cite{odonnell17} observed that this often repeated claim is far from true.
Specifically, O'Donnell exhibited systems of polynomial inequalities with bounded coefficients such that only degree $2$ SoS certificates of non-negativity involve coefficients that are doubly exponential in size.
Thus all SoS certificates need an exponential number of bits to represent and consequently, the ellipsoid algorithm will incur an exponential running time.

As pointed out by O'Donnell, the issue at hand here is not just that of additive error in the solution, i.e., the difference between testing feasibility and near-feasibility.  
Indeed, semidefinite programming via the ellipsoid algorithm can only test feasibility up to a very small additive error.
However, in a majority of applications of SoS SDP relaxations in combinatorial optimization, the variables in the underlying polynomial system are explictly bounded (also known as Archimedian).
Specifically, these include constraints such as $\{ x_i^2 \leq 1 | i \leq [n]\}$, which yield explicit bounds on the values of the variables.
In these settings, if there is an approximate SoS certificate for $r(x) \geq \theta$, then there exists a proper SoS certificate for a slightly weaker lower bound $r(x) \geq \theta - o(1)$.
Therefore, additive error incurred in semidefinite programming can often be traded off for a slightly weaker objective value.
The issue highlighted by O'Donnell is far more serious in that the coefficients of the SoS certificate are too large -- thereby directly affecting the runtime of the ellipsoid algorithm.

On a positive note, O'Donnell shows that a polynomial system whose only constraints are the Boolean constraints $\{x_i^2 = x_i | i \in [n]\}$ always admit SoS certificates with polynomial bit complexity.  
He proceeds to ask whether all polynomial systems that include boolean constraints, potentially among others, always admit bounded SoS certificates.

\subsection{Our Results}

In this work, we further explore the issue of bit complexity of SoS proofs, and obtain both positive and negative results.

First, we present an easily verifiable and broadly applicable set of sufficient conditions under which a polynomial optimization problem has small SoS certificates.
Roughly speaking, we show that polynomial systems with {\it rich} sets of solutions have bounded SoS certificates of non-negativity.
Consider a system consisting of polynomial equalities $\cP$ and inequalities $\cQ$.  Our approach consists of looking for assignments $S$ satisfying three criteria (see \prettyref{def:nice} and \prettyref{thm:main} for formal statements).  
\begin{theorem}
Assume $(\cP, \cQ, S)$ satisfies:
\begin{enumerate}
\item The assignments $S$ robustly satisfy the inequalities in $\cQ$.  
\item The polynomial calculus proof system is both complete and efficient over $S$.  In other words, all degree $d$ polynomial identities over $S$ can be derived using a degree $O(d)$ polynomial derivation from the equalities $\cP$.
\item The assignments $S$ are {\it spectrally rich} in that smallest non-zero eigenvalue of their covariance matrix is at least $2^{-\poly(n^d)}$. 
\end{enumerate}
Then if $r$ has a degree $d$ proof of non-negativity from $\cP$ and $\cQ$, it also has a degree $O(d)$ proof of non-negativity with coefficients bounded by $2^{\poly(n^d)}$.
\end{theorem}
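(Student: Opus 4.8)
The plan is to take the hypothesized degree-$d$ certificate --- over whose coefficients we have no control, and which may well be irrational and enormous --- and surgically replace its pieces by ones whose smallness is forced by the structure of $(\cP,\cQ,S)$; the given proof is used only for its existence. Write it in Gram form, $r = v_0^\top G_0 v_0 + \sum_i (v_i^\top G_i v_i)\, q_i + \sum_j \lambda_j p_j$ with $G_0,G_i \sge 0$ and $v_0,v_i$ vectors of monomials of degrees making the whole certificate degree $d$, and set $\sigma_0 = v_0^\top G_0 v_0$, $\sigma_i = v_i^\top G_i v_i$. Evaluating the identity at a point $s \in S$ annihilates the ideal part (since $p_j(s) = 0$) and leaves $r(s) = \sigma_0(s) + \sum_i \sigma_i(s) q_i(s)$, a sum of nonnegative terms; as $S$ lies in a bounded box and $r$ has bounded coefficients, every term is at most $R_0 := \max_{s\in S} r(s) = 2^{\poly(n^d)}$. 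Averaging over the uniform distribution $\mu$ on $S$ converts this into two spectrahedral facts: $\langle \Lambda_0, G_0 \rangle \le R_0$ and $\langle \Lambda_i, G_i \rangle \le R_0$, where $\Lambda_0 := \E_\mu[v_0 v_0^\top]$ is the degree-$d$ moment matrix of $S$ (whose smallest nonzero eigenvalue is $\ge 2^{-\poly(n^d)}$ by Property 3 of \prettyref{def:nice}) and $\Lambda_i := \E_\mu[q_i\, v_i v_i^\top]$ is the corresponding localizing matrix.

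The crux is to turn these inequalities into honest bounds on the Gram matrices. The obstruction is that $G_0$ can carry a large component supported on $\ker \Lambda_0$ --- the certificate may be squaring polynomials that vanish on $S$ --- which is invisible to $\langle \Lambda_0, \cdot \rangle$. I would therefore replace $G_0$ by its compression $G_0' := \Pi_0 G_0 \Pi_0$ onto $\operatorname{range}(\Lambda_0)$, and $G_i$ by $G_i' := \Pi_i G_i \Pi_i$ onto $\operatorname{range}(\E_\mu[v_i v_i^\top])$. Each compression is still PSD; and because $v_0(s), v_i(s)$ lie in these ranges for every $s \in S$, the compressed parts $\sigma_0', \sigma_i'$ agree with $\sigma_0, \sigma_i$ on all of $S$, so $\sigma_0 - \sigma_0'$ and $\sigma_i - \sigma_i'$ are degree-$\le d$ polynomials vanishing on $S$. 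Now Property 1 (robustness) gives $q_i \ge \gamma$ on $S$ for some $\gamma \ge 2^{-\poly(n^d)}$, hence $\Lambda_i \sge \gamma\, \E_\mu[v_i v_i^\top]$, and Property 3 applied to $\E_\mu[v_i v_i^\top]$ bounds its smallest nonzero eigenvalue below by $2^{-\poly(n^d)}$; combining with the averaged inequalities (which are unchanged by compression, since $\Lambda_0, \Lambda_i$ already live on the respective ranges) yields $\Tr(G_0') \le R_0/\lambda_{\min}^+(\Lambda_0)$ and $\Tr(G_i') \le R_0/\bigl(\gamma\, \lambda_{\min}^+(\E_\mu[v_i v_i^\top])\bigr)$, both $2^{\poly(n^d)}$. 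Hence $G_0', G_i'$ --- and thus $\sigma_0' = \sum_k h_k^2$, $\sigma_i' = \sum_k s_{ik}^2$ from any low-norm square decomposition --- have coefficients bounded by $2^{\poly(n^d)}$.

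It remains to repair the identity. Put $e := r - \sigma_0' - \sum_i \sigma_i' q_i$: a polynomial of degree $\le d$ with coefficients of magnitude $2^{\poly(n^d)}$, which vanishes on $S$ because there it equals $r - \sigma_0 - \sum_i \sigma_i q_i = \sum_j \lambda_j p_j = 0$. By Property 2 (completeness and efficiency of polynomial calculus over $S$), $e$ has a degree-$O(d)$ derivation $e = \sum_j \mu_j p_j$ from $\cP$; the coefficients of the minimum-norm such $(\mu_j)$ solve a consistent linear system whose matrix (``multiply by $\cP$'') is rational of bit-complexity $\poly(n^d)$ and therefore has smallest nonzero singular value $\ge 2^{-\poly(n^d)}$, so --- since $e$ has magnitude $2^{\poly(n^d)}$ --- that solution has coefficients $2^{\poly(n^d)}$. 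Then $r = \sigma_0' + \sum_i \sigma_i' q_i + \sum_j \mu_j p_j$ is a degree-$O(d)$ SoS certificate of $r \ge 0$ from $\cP, \cQ$ with all coefficients bounded by $2^{\poly(n^d)}$, as claimed.

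I expect the compression step to be the main obstacle, since it is where all three hypotheses of \prettyref{def:nice} must cooperate: one has to check that projecting the Gram matrices onto the range of the moment matrix of $S$ simultaneously preserves positive semidefiniteness, leaves every value on $S$ intact, pushes the resulting discrepancy into a polynomial that vanishes on $S$ (so that Property 2 can absorb it into the ideal), and --- using Properties 1 and 3 together --- actually produces a finite trace bound. The evaluation-at-$S$ bookkeeping and the bit-complexity of a consistent linear system are routine; it is the latter that forces the degree of the new certificate to grow by a constant factor.
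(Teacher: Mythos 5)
Your proposal matches the paper's own proof essentially step for step: write the certificate in Gram form, average over $S$, project the Gram matrices onto the range of the moment matrix $M_S$ to get trace (hence coefficient) bounds from spectral richness and robustness, absorb the discrepancy from the projection into the ideal using $k$-completeness, and finally bound the ideal multipliers by observing they solve a feasible linear system with bounded data. The only cosmetic differences are that you phrase the final step via the smallest nonzero singular value rather than Cramer's rule, and you name the error term $e$ explicitly rather than carrying the ideal part through the projection; the substance is identical.
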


We demonstrate the broad applicability of the above set of sufficient conditions by using them to show upper bounds on bit complexity for \textsc{Max-CSP}, \textsc{Max-Clique}, \textsc{Matching}, \textsc{Balanced Separator}, \textsc{Max-Bisection}, and optimization over the unit sphere.  In each case, the above sufficient conditions can be verified easily. 

The above set of sufficient conditions are widely applicable in combinatorial optimization, wherein the polynomial system is typically a relaxation of a well-known set of integer solutions.  
In such a setup with integer solutions, we observe in \prettyref{sec:nicespaces} that spectral richness is an immediate consequence of the discrete nature of the set of solutions.
Therefore, in all these setups, the only non-trivial thing to verify is the efficiency of the polynomial calculus proof system.

The work of O'Donnell \cite{odonnell17} exhibited a polynomial system with bounded coefficients which admitted degree $2$ SoS certificate, whose coefficients were necessarily doubly-exponential.
However, the variables in this polynomial system were not all boolean, i.e. did not have the $x_i^2 = x_i$ constraint.
In fact, O'Donnell asked whether every polynomial system with boolean constraints admits a small SoS proof.
Moreover, the polynomial system in \cite{odonnell17} admits a degree $4$ SoS certificate with small bit complexity.  
This opens up the possibility that one can effectively reduce the bit-complexity by raising the degree of the proof.
For instance, if a system admits a degree $d$ SoS certificate then does it always admit a degree $2^d$ SoS certificate with small bit complexity (even under boolean constraints)?
Unfortunately, we refute both of the above possibilities by exhibiting a counterexample.
Formally, we show the following:

\begin{theorem}\label{thm:counter}
	There exists a system of quadratic equations on $n$ variables such that
	\begin{itemize}
	\item The system includes the equation $x_i^2 - x_i = 0$ for each $i \in [n]$.
	\item There exists a polynomial with a degree $2$ SoS certificate of non-negativity, albeit with doubly exponentially large coefficients.
	\item No SoS certificate of degree $d \leq \sqrt{n}$ has coefficients smaller than $\Omega\left(\frac{1}{n^d}\cdot 2^{\exp(\sqrt{n})}\right)$.
	\end{itemize}	
\end{theorem}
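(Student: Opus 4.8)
The plan is to construct the counterexample from a known ``hard'' instance for linear programming/semidefinite relaxations where the only feasible points or dual certificates require doubly-exponential coefficients, and then to graft Boolean constraints on top without destroying this phenomenon. Concretely, I would start from O'Donnell's degree-$2$ example (or a close variant): a system in which certifying $r(x) \ge 0$ at degree $2$ forces a Positivstellensatz multiplier of magnitude roughly $2^{2^{\Theta(n)}}$. That system uses real (non-Boolean) variables, so the first step is to \emph{simulate} the relevant real variables by bundles of Boolean variables. A natural device: introduce Boolean variables $y_{i,0},\dots,y_{i,k}$ and the quadratic constraints $y_{i,j}^2 = y_{i,j}$, and encode a bounded integer $z_i = \sum_j 2^j y_{i,j}$ via a quadratic identity; then rescale to obtain a variable living on a fine grid in $[0,1]$. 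One must check that this encoding is expressible by \emph{quadratic} equations only (which is why the $2^j$-weighted sum, being linear, together with $y_{i,j}^2 = y_{i,j}$, suffices), so that the final system is a system of quadratic equations including $x_i^2 - x_i = 0$ for every variable, as required.

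The second step is to transfer the two quantitative claims through this encoding. For the positive direction, one shows that the degree-$2$ SoS certificate for the original real system pulls back to a degree-$2$ (or $O(1)$-degree, then padded to degree $2$) certificate in the Boolean system: substituting $z_i = \sum_j 2^j y_{i,j}$ into the original identity and using $y_{i,j}^2 = y_{i,j}$ to reduce, one gets a valid SoS proof, whose coefficients are at most polynomially larger in bit-length than the original — hence still only doubly-exponentially large, giving the second bullet. For the negative direction — the heart of the theorem — one needs a lower bound saying that \emph{no} degree-$d$ certificate with $d \le \sqrt n$ has coefficients below $\Omega\!\big(\tfrac{1}{n^d} 2^{\exp(\sqrt n)}\big)$. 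The strategy here is duality: a low-bit-complexity degree-$d$ SoS certificate of $r \ge 0$, together with the explicit bounds coming from the Boolean constraints, would yield (by LP/SDP duality applied to the degree-$d$ pseudoexpectation polytope, which is bounded because of the Boolean constraints) a rational point in that polytope — equivalently a degree-$d$ pseudoexpectation — with denominators of bounded size, or more directly it would certify $r \ge \theta - \text{(small)}$ with a bona fide bounded certificate. One then exhibits a degree-$d$ pseudoexpectation (for all $d \le \sqrt n$) that assigns $\tilde{\mathbb{E}}[r]$ a value forcing any genuine certificate to have a multiplier as large as claimed: this pseudoexpectation is built so that it is ``barely'' feasible, sitting a distance $2^{-\exp(\sqrt n)}$ from violating a constraint, so any SoS identity must have a multiplier scaled up by the reciprocal of that distance. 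The $n^d$ factor accounts for the number of monomials of degree $\le d$ that the coefficients get spread across.

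The main obstacle, I expect, is the negative direction — specifically, engineering a single polynomial system in which this coefficient blow-up persists not just at degree $2$ but all the way up to degree $\sqrt n$, rather than being ``cured'' by going to degree $4$ as in O'Donnell's original example. This requires a gadget whose ``near-infeasibility'' is robust under raising the SoS degree: intuitively one wants a recursive or self-similar construction (e.g. iterating the doubling trick $\Theta(\sqrt n)$ times, each level costing a constant number of variables and squaring the required coefficient size), so that a degree-$d$ proof can only ``see'' and collapse the top few levels of the recursion, leaving $\exp(\sqrt n)$ worth of blow-up untouched below degree $\sqrt n$. Making the pseudoexpectation lower bound work at every intermediate degree — i.e. producing, for each $d \le \sqrt n$, an explicit degree-$d$ pseudo-distribution witnessing near-infeasibility with margin $2^{-\exp(\sqrt n)}$ — and verifying it is genuinely SoS-feasible (positive semidefiniteness of the moment matrix) is the delicate technical core; the Boolean encoding and the positive direction are comparatively routine bookkeeping.
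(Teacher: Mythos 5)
Your high-level plan — start from O'Donnell's iterated-squaring system, swap each real variable for a bundle of Boolean ones, transfer the positive direction by substitution, and exhibit a pseudoexpectation for the negative direction — matches the paper's skeleton. But the encoding you chose is the wrong one, and that is exactly where the theorem lives. You propose the binary encoding $z_i = \sum_j 2^j y_{i,j}$. The paper instead uses the \emph{unweighted} Knapsack-style sum, replacing $y_i$ by $\sum_{j=1}^{2k} w_{ij} - k$. This is essential: the entire lower bound rests on Grigoriev's theorem that degree-$\Omega(k)$ SoS cannot refute $\sum_j w_{ij} - k = r$ for $r \in (0,1)$, which furnishes, for each $r$, a pseudodistribution $\phi_r$ on the $2k$ Boolean variables. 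The paper then takes the \emph{product} $\Phi = \phi_{2\epsilon} \otimes \phi_{(2\epsilon)^2} \otimes \cdots \otimes \phi_{(2\epsilon)^{2^{n-1}}}$ and applies it directly to both sides of any alleged degree-$d$ SoS identity: every term except the $\lambda \cdot (\sum_j w_{nj} - k)^2$ term is killed or sign-controlled, forcing $\|\lambda\|$ to be doubly-exponential. There is no LP/SDP duality step, no ``bounded-denominator rational point'' argument, and no recursive/self-similar gadget — the counterexample is simply O'Donnell's chain with Knapsack variables substituted in, and Grigoriev's pseudodistribution does all the heavy lifting.

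Your binary-weighted encoding breaks this. The sum $\sum_j 2^j y_{i,j}$ has no symmetry, and there is no known (and quite plausibly no true) analogue of Grigoriev's lower bound for it: the weights let low-degree SoS isolate individual bits, so the ``pretend the sum equals a non-integer'' pseudodistribution you need may simply fail to be PSD past degree $O(1)$. You flag the construction of these intermediate-degree pseudodistributions as the delicate core, which is right — but you leave it as a gap, sketching a ``recursive doubling'' gadget that does not correspond to anything and would need a full new SoS lower bound. In short: keep the O'Donnell chain, but replace each $y_i$ by an unweighted sum of $2k$ fresh Booleans minus $k$, set $k = \sqrt{n}$ so the total variable count is $n$, and invoke Knapsack hardness to build the product pseudoexpectation; that closes the negative direction.
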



\section{Preliminaries}
\label{sec:prelims}
For a set of real polynomials $\cP = \{p_1, p_2, \dots, p_m\}$, we denote their generated ideal in $\R[x]$ by $\langle \cP\rangle$ or $\langle p_1, \dots, p_m\rangle$. We will be working with systems of polynomial constraints, and we will use the $\cP$ to denote the equality constraints, and $\cQ$ to denote the inequality constraints, i.e. $p(x) = 0$ and $q(x) \geq 0$ for $p \in \cP$ and $q \in \cQ$. We will usually use $S$ for the set of points satisfying these constraints. We use $\R[x]_d$ for the set of polynomials of degree at most $d$, and we write ${\bf v}_d$ for the vector of polynomials whose entries are the elements of the usual monomial basis of $\R[x]_d$. Similarly, we use ${\bf v}(\alpha)_d$ for the vector of reals whose entries are the entries of ${\bf v}$ evaluated at $\alpha$. We usually omit the dependencies on $d$ as it is clear from context.

\subsection{Polynomial Proofs}
Let $\cP = \{p_1,\dots,p_n\}$ be a set of polynomials, and let $S = \{x \in \R^n | \forall p \in \cP: p(x) = 0\}$. We define a proof of membership in $\langle \cP\rangle$ as follows:
\begin{definition}
We say that $r(x)$ has a \emph{derivation} from $\cP$ if there is a polynomial identity of the form
\[r(x) = \sum_{i}^n \lambda_i(x) p_i(x).\]
We say that the proof has degree $d$ if $\max_i \{\deg \lambda_i p_i\} = d$.
\end{definition}

The above proof system is useful for proving when polynomials are zero on $S$, but often we want to prove that they are positive. To that end, let $\cP = \{p_1,\dots,p_n\}$ and $\cQ = \{q_1,\dots,q_m\}$ be two sets of polynomials, and let $S = \{x \in \R^n | \forall p \in \cP: p(x) = 0, \forall q \in \cQ: q(x) \geq 0\}$. We define a proof of non-negativity as follows:
\begin{definition}
 We say that $r(x)$ has a \emph{Sum-of-Squares proof of non-negativity from $\cP$ and $\cQ$} if there is a polynomial identity of the form
\[r(x) = \sum_{i}^{t_0} h_i^2(x) + \sum_{i}^m \left(\sum_{j}^{t_i} s_j^2(x)\right)q_i(x) + \sum_{i}^n \lambda_i(x) p_i(x).\]
We say the proof has degree $d$ if $\max \{\deg h_i^2, \deg s_j^2q_i, \deg \lambda_i p\} = d$.
\end{definition}
The idea behind this terminology is that if such a proof exists, then $r$ must be non-negative on $S$ since the first two terms are non-negative, and the last term is zero. We will be concerned with not just the degree of these proofs, but also their bit complexity. To this end, we define the following norms on polynomials and proofs: For $p(x) \in \R[x]$, we write $\|p\|$ for the absolute value of the maximum coefficient of $p$ in the standard monomial basis, and for any collection of polynomials $\cP$, we write $\|\cP\| = \max_{p \in \cP} \|p\|$. We will also require a bound on $\|S\| = \max_{\alpha \in S} \|\alpha\|$. Throughout this paper we will assume that the solutions $\alpha$ are {\it explicitly bounded} by $\|\alpha\| \leq 2^{\poly(n^d)}$.

\subsection{Rich Solution Spaces}
In this section we define the conditions we require in order to guarantee that SoS proofs from $\cP$ and $\cQ$ have low bit-complexity.
For a set of assignments $S$ to a polynomial system $(\cP,\cQ)$, define the moment matrix as
\[M_S \coloneqq E_{\alpha \in S}[{\bf v}(\alpha){\bf v}(\alpha)^T]\mcom\]
where the expectation is over the uniform distribution over $S$.  We will omit the subscript and write $M$, if $S$ is clear from the context.

\begin{definition}\label{def:nice}
With the above definitions, 
\begin{itemize}
\item We say that $S$ is \emph{$\delta$-spectrally rich for $(\cP, \cQ)$ up to degree $d$} if every nonzero eigenvalue of $M_S$ is at least $\delta$.
\item We say that $(\cP, \cQ)$ is \emph{$k$-complete on $S$ up to degree $d$} if every zero eigenvector $c$ of $M_S$ (which can be seen as a degree $d$ polynomial $c^T{\bf v}$) has a degree $k$ derivation from $\cP$. 
\item We say that $S$ is \emph{$\epsilon$-robust for $\cQ$} if $\forall q \in \cQ, \forall \alpha \in S: q(\alpha) > \epsilon$.
\end{itemize}
\end{definition}
Spectral richness of the solutions $S$ is equivalent to requiring if $p(x)$ is small on $S$, then there is a polynomial $q$ which agrees with $p$ on $S$ and that has small coefficients. If $(\cP, \cQ, S)$ satisfies all three conditions then we say that $S$ is $(\delta, k, \epsilon)$-\emph{rich} for $(\cP, \cQ)$ up to degree $d$. If $1/\delta = 2^{\poly(n^d)}$, $k = O(d)$, and $1/\epsilon = 2^{\poly(n^d)}$ we simply say $S$ is rich for $(\cP, \cQ)$. We choose these bounds because \prettyref{thm:main} will imply that any constraints with a rich solution space has proofs of non-negativity that can be taken to have polynomial bit complexity. Before we get into the proof of the main theorem, we exhibit polynomial systems that admit rich solutions.

\section{Examples with Rich Solution Spaces}
\label{sec:nicespaces}
In this section we present examples of polynomial systems that admit rich solution spaces. First, we consider the case $S \subseteq \{0,1\}^n$. In this case, the spectral richness is a consequence of the following easy observation.
\begin{lemma} \label{lem:integer}
	Let $M \in \R^{N \times N}$ be an integer matrix with $|M_{ij}| \leq B$ for all $i,j \in [N]$.  The smallest non-zero eigenvalue of $M$ is at least 
	$(BN)^{-N}$.
\end{lemma}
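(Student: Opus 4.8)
The plan is to control the nonzero eigenvalues of $M$ through the arithmetic of its characteristic polynomial $\chi_M(t) = \det(tI - M) = \sum_{k=0}^{N} a_k t^k$. Since $M$ has integer entries, every coefficient $a_k$ is an integer and $a_N = 1$. If $B = 0$ then $M = 0$ and, more generally, if all eigenvalues of $M$ are $0$ (i.e. $\chi_M(t) = t^N$) there is nothing to prove; so I may assume $B \ge 1$ and that $m := \min\{k : a_k \ne 0\}$ satisfies $m < N$. Then I would factor $\chi_M(t) = t^m g(t)$ with $g(t) = \sum_{k=0}^{N-m} a_{k+m} t^k \in \Z[t]$ monic of degree $N-m$ and $g(0) = a_m \ne 0$.

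The heart of the argument is a tension between two facts. On one hand $a_m$ is a nonzero integer, so $|a_m| \ge 1$. On the other hand $(-1)^{N-m} a_m = (-1)^{N-m} g(0)$ is exactly the product of the roots of $g$, namely the product of the nonzero eigenvalues of $M$ counted with algebraic multiplicity; hence $\prod_{\lambda \ne 0} |\lambda| = |a_m| \ge 1$. To convert this into a lower bound on a single nonzero eigenvalue $\lambda^*$, I would pair it with a cheap a priori upper bound on eigenvalue magnitudes: every eigenvalue of $M$ satisfies $|\lambda| \le \max_i \sum_j |M_{ij}| \le BN$, since the spectral radius is at most the induced $\ell_\infty$ operator norm (equivalently, by Gershgorin's disc theorem). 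Since at most $N-1$ nonzero eigenvalues other than $\lambda^*$ occur in the product, each of magnitude at most $BN$, we obtain $1 \le |\lambda^*| \cdot (BN)^{N-1}$, and therefore $|\lambda^*| \ge (BN)^{-(N-1)} \ge (BN)^{-N}$.

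I do not expect a genuine obstacle here; the only points needing care are the degenerate cases noted above (where the statement is vacuous) and the distinction between algebraic and geometric multiplicity of the zero eigenvalue — which is precisely why it is cleaner to reason with $\chi_M$ and its lowest nonzero coefficient than with $\rank M$ and sums of principal minors. Note that the argument uses neither symmetry nor positive-semidefiniteness of $M$; in the intended application $M$ will be a suitably rescaled moment matrix, so that the lemma is applied to $|S| \cdot M_S$ (an integer matrix with entries at most $|S|$) to conclude that $M_S$ is $2^{-\poly(n^d)}$-spectrally rich whenever $S \subseteq \{0,1\}^n$.
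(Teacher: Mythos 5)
Your proof is correct, and it takes a genuinely different route from the paper's. The paper picks a full-rank principal minor $A$ of $M$, writes $M = PAP^{\mathsf T}$ with $P = [I\ ;\ C]$ (which is where symmetry of $M$ is used), and lower-bounds the Rayleigh quotient of $M$ on $(\ker M)^\perp$ by the least eigenvalue of $A$ via $PP^{\mathsf T} = I + C^{\mathsf T}C \succeq I$; it then applies the integer-determinant argument to the full-rank matrix $A$. You instead work directly with the characteristic polynomial: integrality gives $|a_m|\ge 1$ for the lowest nonzero coefficient, $|a_m|$ equals the product of the magnitudes of the nonzero eigenvalues, and the Gershgorin/$\ell_\infty$-norm bound $|\lambda|\le BN$ finishes the job. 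Two things your argument buys: it avoids the principal-minor decomposition entirely, and it does not need $M$ to be symmetric, so it actually proves the lemma as literally stated (the paper's proof silently assumes symmetry, which is harmless for the intended application to the PSD moment matrix $|S|\cdot M_S$, but is not in the hypotheses). You also get the marginally sharper exponent $(BN)^{-(N-1)}$, though that makes no difference downstream. Your handling of the degenerate cases and of algebraic vs.\ geometric multiplicity of $0$ is right, and the bookkeeping in $1 \le |\lambda^*|\,(BN)^{N-1}$ checks out.
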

\begin{proof}
Let $A$ be a full-rank principal minor of $M$ and w.l.o.g. let it be at the upper-left block of $M$. We claim the least eigenvalue of $A$ lower bounds the least nonzero eigenvalue of $M$.
Since $M$ is symmetric, there must be a $C$ such that
\[M = \left[\begin{tabular}{c} $I$ \\ $C$\end{tabular}\right]A\left[\begin{tabular}{cc} $I$ & $C^T$\end{tabular}\right].\]
Let $P = [I, C^T]$, $\rho$ be the least eigenvalue of $A$, and $x$ be a vector perpendicular to the zero eigenspace of $A$. Then we have $x^TMx \geq \rho x^TP^TPx$,
but $x$ is also perpendicular to the zero eigenspace of $P^TP$. Now $P^TP$ has the same nonzero eigenvalues as $PP^T = I + C^TC \succeq I$, and thus $x^TP^TPx \geq 1$, and so every nonzero eigenvalue of $A$ is at least $\rho$. Now $A$ is a full-rank bounded integer matrix with dimension at most $N$. The magnitude of its determinant is at least $1$ and all eigenvalues are at most $N \cdot B$.  Therefore, its least eigenvalue must be at least $(BN)^{-N}$ in magnitude. 
\end{proof}

\begin{lemma}\label{lem:integer-rich}
Let $\cP$ and $\cQ$ be such that $S \subseteq \{0,1\}^n$. Then $S$ is $\delta$-spectrally rich with $\frac{1}{\delta} = 2^{\poly(n^d)}$.
\end{lemma}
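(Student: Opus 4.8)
The plan is to derive this as an immediate corollary of \prettyref{lem:integer}. The key point is that since $S \subseteq \{0,1\}^n$, for every $\alpha \in S$ each entry of the monomial vector ${\bf v}(\alpha)$ is a product of coordinates of $\alpha$, and hence lies in $\{0,1\}$. Therefore each rank-one outer product ${\bf v}(\alpha){\bf v}(\alpha)^T$ is a $0/1$ matrix, and the rescaled moment matrix $|S| \cdot M_S = \sum_{\alpha \in S} {\bf v}(\alpha){\bf v}(\alpha)^T$ is a symmetric \emph{integer} matrix whose entries are bounded by $|S| \leq 2^n$.

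Next I would invoke \prettyref{lem:integer} applied to the integer matrix $|S| \cdot M_S$, with dimension $N$ equal to the number of monomials of degree at most $d$ in $n$ variables, i.e. $N = \binom{n+d}{d} = n^{O(d)}$, and entry bound $B = |S| \leq 2^n$. This yields that every nonzero eigenvalue of $|S| \cdot M_S$ is at least $(BN)^{-N} = (2^n \cdot n^{O(d)})^{-n^{O(d)}} = 2^{-\poly(n^d)}$. Dividing through by $|S| \leq 2^n$, every nonzero eigenvalue of $M_S$ itself is at least $2^{-\poly(n^d)}/2^n = 2^{-\poly(n^d)}$, which is exactly $\delta$-spectral richness with $1/\delta = 2^{\poly(n^d)}$.

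There is no real obstacle here; the statement is essentially a restatement of \prettyref{lem:integer} once one observes the $0/1$ structure of ${\bf v}(\alpha)$ and clears the denominator $|S|$. The only points requiring a line of care are that the degree-$d$ moment matrix is indexed by monomials of degree at most $d$ so that $N = n^{O(d)}$, and that the $0/1$ property is preserved under products of coordinates — both are immediate consequences of $S \subseteq \{0,1\}^n$.
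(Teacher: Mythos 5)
Your proof is correct and follows exactly the paper's approach: observe that $|S|\cdot M_S$ is an integer matrix with entries bounded by $|S|\leq 2^n$, apply \prettyref{lem:integer}, and rescale. You spell out a few more details (the $0/1$ structure of ${\bf v}(\alpha)$ and the dimension count $N = n^{O(d)}$) than the paper's one-line proof, but it is the same argument.
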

\begin{proof}
	Recall $M = E_{\alpha \in S}[{\bf v}(\alpha){\bf v}(\alpha)^T]$, and note that $|S| \cdot M$ is an integer matrix with entries at most $2^n$.  The proof follows by applying \prettyref{lem:integer}. 
\end{proof}

To prove completeness, we typically want to show two things. First, that every degree $d$ polynomial in $\langle \cP\rangle$ has a degree at most $k$ derivation. Second, that there are no polynomials outside $\langle \cP\rangle$ that are zero on $S$. This second condition can be thought of as saying that the set of equations $\cP$ is somehow maximal, i.e., if there are extra polynomial equalities implied by $\cQ$, they should be included in $\cP$.  Here we consider a few examples.

\paragraph{\textsc{Max-CSP}: $\cP = \{x_i^2 - x_i | i \in [n]\}$}
	Here $S = \{0,1\}^n$.  Any polynomial $p$ of degree $d$ can be multilinearized one monomial at a time.  Specifically, we can find degree $d$ multilinear $p^*$ such that $p - p^* = 0$ has a degree $d$ derivation from $\cP$.  Furthermore, the multilinear polynomial $p^*$ is zero over $S$ if and only if all its coefficients are zero.  Thus $\cP$ is $d$-complete up to degree $d$ for all $d \in \N$.  

\paragraph{\textsc{Max-Clique}: $\cP = \{x_i^2 - x_i | i \in [n]\} \cup \{x_ix_j | (i,j) \notin E\}$}
Here $S$ is the set of all cliques in the graph.  Suppose $p$ is a polynomial that is identically zero over $S$.  Without loss of generality, we may assume $p$ is multilinear, if otherwise we can multilinearize it using $\{ x_i^2 - x_i | i \in [n]\}$.  For a multilinear polynomial $p(x) = \sum_{\alpha \subset [n]} \hat{p}_\alpha x_{\alpha}$, we claim that if $p(x) = 0 \forall x \in S$ then for all cliques $\alpha \subset [n]$, the corresponding coefficient $\hat{p}_\alpha = 0$, i.e., all non-zero coefficients of $p$ are non-cliques.
Suppose not, then let $\alpha$ be the smallest clique with $\hat{p}_\alpha \neq 0$.  Then, we will have $p( \Ind_{\alpha}) = \hat{p}_\alpha \neq 0$, a contradiction.
Since all coefficients of $p$ are non-cliques, each monomial in $p$ can be eliminated using an appropriate polynomial from $\{ x_i x_j | (i,j) \notin E\}$.

\begin{remark}
	More generally, the above two cases are special cases of the following general setup: $\cQ$ is empty, and $\cP$ is a Gr\"obner basis. A Gr\"obner basis for an ideal is a generating set of polynomials that allow a well-defined multivariate polynomial division (see \cite{Grobner} for more information). Computing the Gr\"obner basis is often the first step in practical polynomial equation solvers, and we note the following easy lemma:
\begin{lemma}\label{lem:grobner}
If $\cQ = \emptyset$ and $\cP$ is a Gr\"obner basis for $\langle \cP\rangle$, then $S$ is $d$-complete up to degree $d$. 
\end{lemma}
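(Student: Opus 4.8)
The plan is to unwind the definition of $d$-completeness in \prettyref{def:nice} and reduce it to two standard facts about Gröbner bases. A zero eigenvector $c$ of $M_S$, viewed as a polynomial $p \coloneqq c^T{\bf v}$ of degree at most $d$, satisfies $0 = c^T M_S c = \E_{\alpha \in S}[p(\alpha)^2]$, so $p$ vanishes identically on $S$; conversely, every polynomial in $\R[x]_d$ vanishing on $S$ gives a zero eigenvector of $M_S$. Hence it suffices to show: every $p \in \R[x]_d$ with $p|_S \equiv 0$ admits a degree-$\le d$ derivation $p = \sum_i \lambda_i p_i$ from $\cP$.

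The key tool is the multivariate division algorithm relative to the Gröbner basis $\cP$, run under a \emph{graded} monomial order (e.g.\ graded reverse-lexicographic). Dividing $p$ by $\cP$ produces an identity $p = \sum_i \lambda_i p_i + r$, where $r$, the normal form of $p$, is a linear combination of standard monomials --- those divisible by no leading monomial $\mathrm{LT}(p_i)$. Because the order is graded, $\mathrm{LT}(p_i)$ has degree $\deg p_i$, so each reduction step subtracts a multiple of some $t\cdot p_i$ with $\deg t \le \deg p - \deg p_i$, cancelling the current leading term and introducing no monomial of degree exceeding $\deg p$. Inductively, every intermediate polynomial --- in particular each product $\lambda_i p_i$ and the remainder $r$ --- has degree at most $\deg p \le d$. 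This already yields the degree bound, so it remains to argue $r = 0$.

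Since $p$ and every $p_i$ vanish on $S$, so does $r = p - \sum_i \lambda_i p_i$. Because $\cP$ is a Gröbner basis, the standard monomials form an $\R$-basis of $\R[x]/\langle\cP\rangle$ (Macaulay's theorem), so their span meets $\langle\cP\rangle$ only in $0$; thus $r$ is the zero polynomial as soon as $r \in \langle\cP\rangle$. In the settings the lemma is meant for, $\cP$ generates the \emph{full} vanishing ideal of $S$ --- equivalently, there is no polynomial outside $\langle\cP\rangle$ that is zero on $S$ --- and then $r|_S \equiv 0$ forces $r \in \langle\cP\rangle$, hence $r = 0$. Consequently $p = \sum_i \lambda_i p_i$ is the required degree-$\le d$ derivation, and $(\cP,\emptyset)$ is $d$-complete on $S$ up to degree $d$.

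The two points that need care are precisely the two halves of completeness isolated just before the lemma. For the degree bound it is essential that the Gröbner basis be taken with respect to a graded order: under a non-graded order (pure lexicographic, say) intermediate polynomials of the division can have degree larger than $\deg p$, and the degree-$d$ conclusion genuinely fails. The step "$r|_S \equiv 0 \Rightarrow r \in \langle\cP\rangle$" needs $\langle\cP\rangle$ to equal the vanishing ideal of $S$ (e.g.\ $\langle\cP\rangle$ real radical with $S$ its variety); this is automatic for the discrete solution spaces arising in combinatorial optimization --- such as $S = \{0,1\}^n$ for \textsc{Max-CSP}, or the clique indicators for \textsc{Max-Clique}, where $S$ is exactly the common zero set and the listed equations already generate its vanishing ideal --- but it is the hypothesis to verify when applying the lemma to a new system. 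With both in hand the argument is routine, which is why the lemma is billed as easy.
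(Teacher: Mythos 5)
Your proof follows the same route as the paper's --- multivariate division by the Gr\"obner basis gives the derivation, and the assumption that polynomials vanishing on $S$ lie in $\langle\cP\rangle$ closes the argument. You are, however, noticeably more careful than the paper's two-sentence proof on two points that are real imprecisions in the stated lemma. First, the degree bound on the quotients $\lambda_i$ does require the Gr\"obner basis to be computed with respect to a \emph{graded} monomial order; under pure lex the leading term of $p_i$ may have degree strictly less than $\deg p_i$, and a single reduction step $p \mapsto p - t\cdot p_i$ can then raise the degree of the working polynomial above $\deg p$. Second, the paper's sentence ``Because $\cQ = \emptyset$, the polynomials that are zero on $S$ are exactly the polynomials in $\langle\cP\rangle$'' is, as you point out, not a consequence of $\cQ=\emptyset$ alone: take $\cP=\{x^2\}$, $\cQ=\emptyset$, so $S=\{0\}$ and $x$ vanishes on $S$ while $x\notin\langle x^2\rangle$. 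The hypothesis actually needed is that $\langle\cP\rangle$ equals the full vanishing ideal of $S$ (which holds in every example the paper subsequently plugs in), and you correctly isolate it as such. So the proposal is correct, essentially the paper's argument, and in fact tightens the lemma's statement.
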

\begin{proof}
If $\cP$ is a Gr\"obner basis, then every degree $d$ polynomial in $\langle \cP\rangle$ has a degree $d$ derivation via multivariate division. Because $\cQ = \emptyset$, the polynomials that are zero on $S$ are exactly the polynomials in $\langle \cP\rangle$. 
\end{proof}

\end{remark}

%
%

\paragraph{\textsc{Balanced Separator}: $\cP = \{x_i^2 - x_i | i \in [n]\}$, $\cQ = \{2n/3 - \sum_i x_i, \sum_i x_i - n/3\}$}
%
The solution space $S$ here is all bit strings with hamming weight between $n/3$ and $2n/3$. 
Suppose $r$ is a polynomial that is zero on $S$.  
Without loss of generality, we may assume that $r$ is multilinear by using the constraints $\{x_i^2 - x_i | i \in [n]\}$.
Suppose $r$ is a non-zero multilinear polynomial which is zero on $S$, then its symmetrized version $r^* = \frac{1}{n!}\sum_{\sigma \in \cS_n} \sigma r$ must also be zero on $S$, where $\sigma$ acts by permuting the variable names. However, $r^*$ is a univariate polynomial in $\sum_i x_i$ (modulo the Boolean constraints). 
This univariate polynomial has $n/3$ zeros, and thus must have degree at least $n/3$. Since symmetrizing doesn't change degree, we conclude that $r$ also has degree at least $n/3$. Thus every non-zero multilinear polynomial that is zero on $S$ but not in $\langle \cP\rangle$, has degree at least $n/3$. 
Therefore the system is $d$-complete up to degree $d$ for $d \leq \frac{n}{3}$.
%
The polynomials in $\cQ$ can be perturbed by $1/2$ to make them $1/2$-robust, and thus $S$ is rich for $(\cP, \cQ)$. 

\paragraph{\textsc{Matching}: $\cP = \{x_{ij}^2 - x_{ij} | i,j \in [n]\} \cup \{\sum_i x_{ij} - 1 | i \in [n]\} \cup \{x_{ij}x_{ik} | i,j,k \in [n]\}$} These constraints are $2d$-complete as proven in \cite{Braun:2016:MPN:2884435.2884510}.

\paragraph{\textsc{Max-Bisection}: $\cP = \{x_i^2 - x_i | i \in [n]\} \cup \{\sum_i x_i - n/2\}$} We will prove in \prettyref{sec:balance} that these constraints are $d$-complete. The proof will be very similar to the one for \textsc{Matching}, due to the similar symmetry of the constraints.

\paragraph{\textsc{Unit-Vector}: $\cP = \{\sum_i x_i^2 - 1\}$} Here $S = \{x: \|x\| = 1\}$. This constraint appears frequently in tensor norm problems as a way to enforce scaling. Since $\cQ = \emptyset$, it is clearly robust. It may be well-known that $\cP$ is $d$-complete, but we could not find a reference so we record it here for completeness. Let $p(x)$ be any degree $d$ polynomial which is zero on the unit sphere, and define $p_0(x) = p(x) + p(-x)$. Clearly $p_0$ is also zero on the unit sphere, with degree $k = 2\lfloor (d+1)/2 \rfloor$. Note that $p_0$ has only terms of even degree. 
Define a sequence of polynomials $\{p_i\}_{i \in \{0,\ldots, k\}}$ as follows.
Define $q_i$ to be the part of $p_i$ which has degree strictly less than $k$, and let $p_{i+1} = p_i + q_i\cdot(\sum_i x_i^2 - 1)$. Then each $p_i$ is zero on the unit sphere and has no monomials of degree strictly less than $2i$. Thus $p_{k/2}$ is homogeneous of degree $k$. But then $p(tx) = t^kp_k(x) = 0$ for any unit vector $x$ and $t > 0$, and thus $p_k(x)$ must be the zero polynomial. This implies that $p_0$ is a multiple of $\sum_i x_i^2 - 1$. The same logic shows that $p(x) - p(-x)$ is also a multiple of $\sum_i x_i^2 - 1$, and thus so is $p(x)$. Now $\langle \cP\rangle$ is principal, so every degree $d$ polynomial in it has a degree $d$ derivation, so $(\cP, \cQ, S)$ is $d$-complete.

To prove spectral-richness, we note that in \cite{10.2307/2695802} the author gives an exact formula for each entry of the matrix $M = \int_{S} p(x)$ for any polynomial $p$. The formulas imply that $(n+d)!\pi^{-n/2} M$ is an integer matrix with entries (very loosely) bounded by $(n+d)!d!2^n$. By \prettyref{lem:integer}, we conclude that $S$ is $\delta$-spectrally rich with $1/\delta = 2^{\poly(n^d)}$.

We collect the examples discussed in this section here:
\begin{corollary}\label{cor:examples}
The following constraints admit rich solutions:
\begin{itemize}
\item \textsc{Max-CSP}: $\cP = \{x_i^2 - x_i | i \in [n]\}$. 
\item \textsc{Max-Clique}: $\cP = \{x_i^2 - x_i | i \in [n]\} \cup \{x_ix_j | (i,j) \notin E\}$.
\item \textsc{Balanced Separator}: $\cP = \{x_i^2 - x_i | i \in [n]\}$, $\cQ = \{2n/3 - \sum_i x_i, \sum_i x_i - n/3\}$.
\item \textsc{Matching}: $\cP = \{x_{ij}^2 - x_{ij} | i,j \in [n]\} \cup \{\sum_i x_{ij} - 1 | i \in [n]\} \cup \{x_{ij}x_{ik} | i,j,k \in [n]\}$.
\item \textsc{Max-Bisection}: $\cP = \{x_i^2 - x_i | i \in [n]\} \cup \{\sum_i x_i - n/2\}$.
\item \textsc{Unit-Vector}: $\cP = \{\sum_i x_i^2 - 1\}$.
\end{itemize}
\end{corollary}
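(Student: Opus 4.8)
The plan is to establish, for each of the six systems, the three conditions from \prettyref{def:nice}: $\delta$-spectral richness with $1/\delta = 2^{\poly(n^d)}$, $k$-completeness with $k = O(d)$, and $\epsilon$-robustness with $1/\epsilon = 2^{\poly(n^d)}$. Most of this has already been assembled in the case-by-case discussion of this section, so the proof is largely a matter of collecting those verifications, and the first step is to isolate the two structural facts that handle the bulk of the work. First, for every system except \textsc{Unit-Vector}, the solution set $S$ lives inside a Boolean cube $\{0,1\}^m$ with $m = \poly(n)$ (with $m=n$ for \textsc{Max-CSP}, \textsc{Max-Clique}, \textsc{Balanced Separator}, \textsc{Max-Bisection}, and $m=n^2$ for \textsc{Matching}), so spectral richness is immediate from \prettyref{lem:integer-rich} (whose proof goes through verbatim with $m$ in place of $n$, since $|S|\cdot M_S$ is then integral with entries at most $2^m$ and \prettyref{lem:integer} applies). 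For \textsc{Unit-Vector}, spectral richness instead uses the closed-form moment formula of \cite{10.2307/2695802} recorded above: an explicit integer rescaling of $M_S$ has entries bounded by $2^{\poly(n^d)}$, so \prettyref{lem:integer} again gives $1/\delta = 2^{\poly(n^d)}$.

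Second, I would dispatch robustness. For \textsc{Max-CSP}, \textsc{Max-Clique}, \textsc{Matching}, \textsc{Max-Bisection}, and \textsc{Unit-Vector} the inequality set $\cQ$ is empty, so $\epsilon$-robustness holds vacuously. The only system with a nonempty $\cQ$ is \textsc{Balanced Separator}; there I replace the constraints $2n/3 - \sum_i x_i \geq 0$ and $\sum_i x_i - n/3 \geq 0$ by their slack versions obtained by adding $1/2$. Since $\sum_i x_i$ is integer-valued on $\{0,1\}^n$, this perturbation leaves the feasible integer points — hence $S$ — unchanged, while every constraint now evaluates to at least $1/2$ on $S$, giving $1/2$-robustness.

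The remaining ingredient is $O(d)$-completeness, for which I would invoke the per-system verifications already carried out. \textsc{Max-CSP} is $d$-complete: any degree $d$ polynomial multilinearizes monomial-by-monomial via $\{x_i^2 - x_i\}$, and a nonzero multilinear polynomial has a nonzero evaluation in $\{0,1\}^n$, so the only zero polynomials lie in $\langle \cP\rangle$ with a degree $d$ derivation. \textsc{Max-Clique} is $d$-complete by the smallest-clique argument: after multilinearizing, a polynomial vanishing on all cliques can have only non-clique monomials, each eliminable by an $x_ix_j$ with $(i,j)\notin E$. \textsc{Balanced Separator} is $d$-complete for $d \leq n/3$ by symmetrization: a symmetric multilinear polynomial vanishing on all Hamming weights in $[n/3,2n/3]$ is a univariate polynomial in $\sum_i x_i$ with at least $n/3$ roots, hence has degree $\geq n/3$. \textsc{Unit-Vector} is $d$-complete by splitting into even and odd parts, iteratively stripping the terms of degree $< k$ using $\sum_i x_i^2 - 1$ until homogeneous, observing that a homogeneous polynomial vanishing on the sphere vanishes identically, concluding each part is a multiple of $\sum_i x_i^2 - 1$, and using that $\langle\cP\rangle$ is principal. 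Completeness of \textsc{Matching} ($2d$-completeness) is quoted from \cite{Braun:2016:MPN:2884435.2884510}, and completeness of \textsc{Max-Bisection} is established in \prettyref{sec:balance}. Combining spectral richness, robustness, and completeness in each case yields that $S$ is rich for $(\cP,\cQ)$.

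The one genuinely nontrivial piece, and the place where I expect the main difficulty, is the \textsc{Max-Bisection} completeness claim: unlike the empty-$\cQ$ Gr\"obner-basis examples, the ideal $\cP$ here contains the global linear constraint $\sum_i x_i - n/2$, so one must show that every multilinear polynomial vanishing on all weight-$n/2$ bit strings but lying outside $\langle\cP\rangle$ has large degree. I anticipate this follows the same symmetrization-and-root-counting template as \textsc{Balanced Separator} and \textsc{Matching}, leveraging the $\cS_n$-symmetry of the constraints to reduce to a univariate statement, and it is carried out in \prettyref{sec:balance}; everything else in the corollary is bookkeeping over the analysis already completed in this section.
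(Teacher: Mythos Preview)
Your proposal is correct and mirrors the paper's approach exactly: the corollary is simply a summary of the case-by-case verifications already done in the section, and you have collected spectral richness (via \prettyref{lem:integer-rich} for the Boolean cases, via the moment formula and \prettyref{lem:integer} for \textsc{Unit-Vector}), robustness (vacuous except for \textsc{Balanced Separator}, where the $1/2$ perturbation works), and completeness (the per-system arguments, the citation for \textsc{Matching}, and the deferral to \prettyref{sec:balance} for \textsc{Max-Bisection}) just as the paper does. One minor remark: your anticipated method for \textsc{Max-Bisection} (``symmetrization-and-root-counting'') is not quite what \prettyref{sec:balance} actually does --- the argument there is an induction on $d$ using transpositions to reduce $r - \sigma_{ij}r$ to a lower-degree instance over $\cP(n-1)$ --- but since you correctly defer to that section, this does not affect the proof.
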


\subsection{Limitations}
While \prettyref{thm:main} allows us to prove that many different systems of polynomial constraints have well-behaved SoS proofs, there are a few areas where it comes up short. Most noticeably, to contain a rich set of solutions the solution space has to be nonempty. This can be a problem when trying to find SoS proofs of infeasibility. For example, one common technique is to introduce lower bounds on an objective function $f(x)$ of a maximization problem as constraints and attempt to use SoS to find a refutation, i.e. a proof of non-negativity for the constant polynomial $-1$. We are unable to show that these proofs can be taken to have polynomial bit complexity since they have empty solution spaces. As another example, we are unable to use our framework to show that refutations of the knapsack constraints use only polynomially many bits, even though it is clear by simply examining these known refutations that they only involve small coefficients.

\section{Rich Solution Spaces Yield Bounded SoS Proofs}
\label{sec:main}

In this section we prove our main theorem:

\begin{theorem}\label{thm:main}
Let $\cP = \{p_1,\dots,p_m\}$ and $\cQ = \{q_1, \dots, q_\ell\}$ be sets of polynomials with $S = \{\alpha \in \R^n | \forall p \in \cP: p(\alpha) = 0, \forall q \in \cQ: q(\alpha) \geq 0\}$. Assume that the set of solutions $S$ is $(k,\delta,\epsilon)$-rich for $(\cP,\cQ)$.

Let $r(x)$ be a polynomial nonnegative on $S$, and assume $r$ has a degree $d$ sum-of-squares proof of nonnegativity 
\[r(x) = \sum_{i=1}^{t_0} h_i^2 + \sum_{i=1}^\ell \left(\sum_{j=1}^{t_i} s_j^2\right) q_i + \sum_{i=1}^m \lambda_i p_i.\] 
Then $r$ has a degree $k$ sum-of-squares proof of nonnegativity such that the coefficients of every polynomial appearing in the proof are bounded by $2^{\poly(n^k,\log \frac{1}{\delta},\log \frac{1}{\epsilon})}$. In particular, if $S$ is rich then every coefficient can be written down with only $\poly(n^d)$ bits. 
\end{theorem}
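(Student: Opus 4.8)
The plan is to pass from the given degree-$d$ proof to its Gram-matrix form, replace each Gram matrix by its compression onto the column space of the moment matrix $M_S$ -- this is where spectral richness does the work -- and then push the resulting error, which vanishes on $S$, into the ideal $\langle \cP\rangle$ via $k$-completeness. First rewrite the hypothesized identity as $r = {\bf v}^\top Q_0 {\bf v} + \sum_{i=1}^{\ell}({\bf v}^\top Q_i {\bf v})\,q_i + \sum_i \lambda_i p_i$, where $Q_0 \succeq 0$ is indexed by monomials of degree $\le d/2$ and each $Q_i \succeq 0$ by monomials of degree $\le d_i := \lfloor (d - \deg q_i)/2\rfloor$; this is the standard correspondence between a sum of squares and a PSD Gram matrix. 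Evaluating at any $\alpha \in S$ kills the ideal term, and since $q_i(\alpha) \ge 0$ and ${\bf v}(\alpha)^\top Q_i {\bf v}(\alpha) \ge 0$, we get ${\bf v}(\alpha)^\top Q_0 {\bf v}(\alpha) \le r(\alpha)$ and, by $\epsilon$-robustness ($q_i(\alpha) > \epsilon$), ${\bf v}(\alpha)^\top Q_i {\bf v}(\alpha) \le r(\alpha)/\epsilon$. Taking expectations over the uniform distribution on $S$ gives $\langle Q_0, M_S^{(d/2)}\rangle \le R$ and $\langle Q_i, M_S^{(d_i)}\rangle \le R/\epsilon$, where $M_S^{(e)}$ denotes the principal submatrix of $M_S$ on monomials of degree $\le e$ and $R := \sup_{\alpha \in S} r(\alpha) \le 2^{\poly(n^d)}\,\|r\|$ since $S$ is explicitly bounded.

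For the key step, let $\Pi$ (resp.\ $\Pi_i$) be the orthogonal projection onto the column space of $M_S^{(d/2)}$ (resp.\ $M_S^{(d_i)}$), and set $Q_0' := \Pi Q_0 \Pi$, $Q_i' := \Pi_i Q_i \Pi_i$. For every $\alpha \in S$ the vector ${\bf v}(\alpha)$ lies in the column space of the relevant moment matrix (any $u$ with $M_S u = 0$ has $\E_{\alpha\in S}[(u^\top {\bf v}(\alpha))^2] = 0$, hence $u^\top {\bf v}(\alpha) = 0$), so ${\bf v}(\alpha)^\top Q_0' {\bf v}(\alpha) = {\bf v}(\alpha)^\top Q_0 {\bf v}(\alpha)$ on $S$, and likewise for the $Q_i$. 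Hence $r' := {\bf v}^\top Q_0' {\bf v} + \sum_i ({\bf v}^\top Q_i' {\bf v})\,q_i$ agrees with $r$ on $S$, still has degree $\le d$, and is still a sum of squares plus weighted sums of squares. Moreover, spectral richness gives $M_S^{(d/2)} \succeq \delta\,\Pi$ and, after embedding degree-$d_i$ polynomials into $\R[x]_d$ (which preserves both the $L^2(S)$-norm and the coefficient $2$-norm), $M_S^{(d_i)} \succeq \delta\,\Pi_i$ too. Since $Q_0'$ is supported on the range of $\Pi$, $\delta\,\Tr(Q_0') \le \langle Q_0', M_S^{(d/2)}\rangle \le R$, and similarly $\Tr(Q_i') \le R/(\epsilon\delta)$; as these matrices are PSD, $\|Q_0'\|_{\mathrm{op}}, \|Q_i'\|_{\mathrm{op}} \le R/(\epsilon\delta)$. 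Factoring $Q_0' = \sum_j h_j'(h_j')^\top$ and $Q_i' = \sum_j s_{ij}'(s_{ij}')^\top$ through positive square roots (or, to keep everything rational, through a rational $LDL^\top$ decomposition plus Lagrange's four-square theorem) then turns $r'$ into an SoS expression whose polynomials $h_j', s_{ij}'$ have coefficients bounded by $\sqrt{R/(\epsilon\delta)}$, with $n^{O(d)}$ squares.

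It remains to absorb $r - r'$: this polynomial has degree $\le d$ and vanishes on $S$, so it lies in the kernel of $M_S$ and, by $k$-completeness, has a degree-$k$ derivation $r - r' = \sum_i \mu_i p_i$ from $\cP$. To control the multipliers, note that the valid tuples $(\mu_i)$ are exactly the solutions of a linear system whose matrix has entries among the bounded (essentially integer) coefficients of $\cP$; choosing the minimum-norm solution and lower-bounding the least nonzero singular value of this matrix in the spirit of \prettyref{lem:integer} yields $\|\mu_i\| \le 2^{\poly(n^d)}\,\|r - r'\|$. Assembling everything, $r = \sum_j (h_j')^2 + \sum_i\big(\sum_j (s_{ij}')^2\big)\,q_i + \sum_i \mu_i p_i$ is a degree-$k$ (with $k \ge d$ without loss of generality) SoS proof all of whose coefficients are at most $2^{\poly(n^k, \log(1/\delta), \log(1/\epsilon))}$ once the $\poly(n^d)$-bit quantities $\log\|\cP\|, \log\|\cQ\|, \log\|S\|, \log\|r\|$ are folded in; when $S$ is rich this bound is $2^{\poly(n^d)}$, i.e.\ $\poly(n^d)$ bits.

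The genuinely delicate part is the last step: $k$-completeness guarantees only a bounded-\emph{degree} derivation of $r - r'$ from $\cP$, not a bounded-\emph{coefficient} one, so the coefficient bound on the $\mu_i$ needs its own argument (a minimum-norm solution of a linear system together with a determinant / least-singular-value lower bound of the type used in \prettyref{lem:integer}), which in turn relies on the coefficients of $\cP$ being suitably discrete. A secondary point is the degree bookkeeping under compression, which is why one uses the degree-truncated projections $\Pi_i$ and checks that spectral richness descends to the lower-degree moment matrices $M_S^{(d_i)}$.
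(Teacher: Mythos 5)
Your proof takes essentially the same route as the paper's: pass to the Gram-matrix form, average the identity over $S$ to kill the ideal term and extract trace bounds, project the Gram matrices onto the range of the moment matrix (routing the kernel components into $\langle\cP\rangle$ via $k$-completeness), use $\delta$-spectral richness together with $\epsilon$-robustness to bound the projected Gram matrices, and finally control the ideal multipliers by a Cramer's-rule / linear-algebra argument. The one cosmetic difference is that you compress each $Q_i$ with its own projector $\Pi_i$ and then derive $r-r'$ wholesale, whereas the paper works with a single projector $\Pi^\perp$ onto the orthogonal complement of $\ker M_S$ and absorbs each kernel eigenvector's contribution into the ideal inline; these are the same calculation in different bookkeeping.

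One step of yours, as written, is not quite right: the claim that the embedding $\R[x]_{d_i}\hookrightarrow\R[x]_d$ transfers $\delta$-spectral richness from $M_S=M_S^{(d)}$ to its principal submatrix $M_S^{(d_i)}$. Preserving the $L^2(S)$-norm and coefficient $2$-norm is not enough: a vector $c$ orthogonal to $\ker M_S^{(d_i)}$ need not, after zero-padding, be orthogonal to $\ker M_S^{(d)}$ (the kernel of the full matrix can contain vectors whose low-degree restrictions are \emph{not} in the kernel of the truncation), so the lower bound on the quadratic form does not automatically descend. Spectral richness of the truncated moment matrices has to be established directly; in all the paper's examples it is (\prettyref{lem:integer} applies to each truncation separately when $S$ is discrete or the sphere). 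To be fair, the paper's own proof glosses over the same degree-truncation issue from a different direction: it compresses $D_i' = \Pi^\perp D_i \Pi^\perp$ with a projector living on all degree-$\le d/2$ monomials, which need not preserve the degree restriction on $D_i$. So this is a shared imprecision rather than a gap peculiar to your argument, but the "embedding preserves norms" justification should be dropped and replaced with a direct verification of spectral richness at the relevant truncations.
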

\begin{proof}
First, we rewrite the proof into a more convenient form before proving bounds on each individual term. Because the elements of ${\bf v}$ are a basis for $\R[x]_d$, every polynomial in the proof can be expressed as $c^T{\bf v}$, where $c$ is a vector of reals:
\begin{align*} r(x) &= \sum_{i=1}^{t_0} (c_i^T{\bf v})^2 + \sum_{i=1}^\ell \left(\sum_{j=1}^{t_i} (d_{ij}^T{\bf v})^2\right)q_i + \sum_{i=1}^m \lambda_i p_i \\
&= \langle C, {\bf v}{\bf v}^T\rangle + \sum_{i=1}^\ell \langle D_i, {\bf v}{\bf v}^T\rangle q_i + \sum_{i=1}^m \lambda_i p_i
\end{align*}
for PSD matrices $C$, $D_1,\dots,D_\ell$. Next, we average this polynomial identity over all the points $\alpha \in S$:
\begin{align*}
E_{\alpha \in S}[r(\alpha)] &= \langle C, E_{\alpha \in S}[{\bf v}(\alpha){\bf v}(\alpha)^T]\rangle + \sum_{i=1}^\ell \langle D_i, E_{\alpha \in S}[{\bf v}(\alpha){\bf v}(\alpha)^T]\rangle q_i(\alpha) + 0 \\
\end{align*}
The LHS is at most $\poly(\|r\|, \|S\|)$, and the RHS is a sum of positive numbers, so the LHS is a bound on each term of the RHS. 
We would like to say that since $S$ is $\delta$-spectrally rich, the first term is at least $\delta Tr(C)$. 
Unfortunately the averaged matrix may have zero eigenvectors, and it is possible that $C$ could have very large eigenvalues in these directions. 
However these eigenvectors must correspond to polynomials that are zero on $S$. Because $(\cP, \cQ, S)$ is complete, these can be absorbed into the final term. More formally, let $\Pi = \sum_u uu^T$ be the projector onto the zero eigenspace of $M = E_{\alpha \in S}[{\bf v}(\alpha){\bf v}(\alpha)^T]$. Because $(\cP, \cQ, S)$ is complete, for each $u$ we have a degree $k$ derivation $u^T{\bf v} = \sum_i \sigma_{ui} p_i$. Then $\Pi {\bf v}{\bf v}^T = \sum_u (u^T{\bf v}) u{\bf v}^T$. Thus we can write
\begin{align*}
\langle C, {\bf v}{\bf v}^T\rangle &= \langle C, (\Pi + \Pi^\perp){\bf v}{\bf v}^T(\Pi + \Pi^\perp)\rangle \\
&= \langle C, \Pi^\perp {\bf v}{\bf v}^T \Pi^\perp\rangle + \sum_u u^T{\bf v}\left(\langle C, \Pi^\perp {\bf v}u^T + {\bf v}u^T\Pi^\perp + {\bf v}u^T\Pi\rangle\right) \\
&= \langle \Pi^\perp C \Pi^\perp, {\bf v}{\bf v}^T\rangle + \sum_i \sigma_i p_i.
\end{align*}
Doing the same for the other terms and setting $C' = \Pi^\perp C \Pi^\perp$ and similarly for $D_i'$, we get a new proof:
\[r(x) = \langle C', {\bf v}{\bf v}^T\rangle + \sum_{i=1}^\ell \langle D_i', {\bf v}{\bf v}^T\rangle q_i + \sum_{i=1}^m \lambda_i' p_i.\]
Now after averaging over $S$, the zero eigenspace of $C'$ is contained in the zero eigenspace of $M$. Taken with the $\delta$-spectral richness, we have
\[\poly(\|r\|,\|S\|) \geq \delta Tr(C) + \sum_{i=1}^\ell \delta Tr(D_i') q_i(\alpha).\]
Because each $q_i(\alpha) \geq \epsilon$, we get $C'$ and $D_i'$ have entries bounded by $\poly(\|r\|, \|S\|, \frac{1}{\delta}, \frac{1}{\epsilon})$.

The only thing left to do is to bound the coefficients $\lambda_i'$, but this is easy because the SoS proof is linear in these coefficients. If we imagine the coefficients of the $\lambda_i'$ as variables, then the linear system induced by the polynomial identity
\[r(x) - \langle C', {\bf v}{\bf v}^T\rangle - \sum_{i=1}^\ell \langle D_i', {\bf v}{\bf v}^T\rangle = \sum_{i=1}^m \lambda_i' p_i\]
is clearly feasible, and the coefficients of the LHS are bounded by $\poly(\|r\|, \|S\|, \frac{1}{\delta}, \frac{1}{\epsilon})$. There are $O(n^k)$ variables, so by Cramer's rule, the coefficients of the $\lambda_i'$ can be taken to be bounded by $\poly(\|\cP\|^{n^k}, \frac{1}{\delta}, \frac{1}{\epsilon}, \|r\|, \|S\|, n!)$. $\|\cP\|, \|r\| \leq 2^{\poly(n^d)}$ as they are considered part of the input, $\|S\| \leq 2^{\poly(n^d)}$ by the explicitly bounded assumption, and $d \leq k$. Thus, this bound is at most $2^{\poly(n^k, \log \frac{1}{\delta}, \log \frac{1}{\epsilon})}$.

\end{proof}

\section{Boolean Systems With No Small-Coefficient Proofs}\label{sec:counterexample}
In \cite{odonnell17}, the author gives an example of a polynomial system for which degree two SoS proofs can certify non-negativity of a certain poylnomial, but the proofs necessarily involves coefficients of doubly-exponential size. However, there are two weaknesses in his example system. First, it is not a Boolean one, i.e. it contains variables $y_i$ for which the constraint $y_i^2 - y_i = 0$ is not present in the constraints. Many practical optimization problems have Boolean constraints, and in \cite{odonnell17}, the author hoped that having those constraints might suffice to imply that all proofs could have small bit complexity. Second, while the degree two proofs must have exponential bit complexity, there were degree four proofs of non-negativity with polynomial bit complexity. 
%
%
In this section, we strengthen his counterexample, giving an example of a Boolean system with $n$ variables for which there is a polynomial that has a degree two proof of non-negativity, but no proof with polynomial bit complexity until degree $\Omega(\sqrt{n})$.

\subsection{A First Example}
The original example given in \cite{odonnell17} essentially contains the following system whose repeated squaring is responsible for the blowup of the coefficients in the proofs:
\[\begin{tabular}{ccccc}
$y_1^2 - y_2 = 0$, & $y_2^2 - y_3 = 0$, & $\dots$, & $y_{n-1}^2 - y_n = 0$, & $y_n^2 = 0$.
\end{tabular}\]
Clearly, the only solution to the system is $(0,0,0,\dots,0)$, and therefore the polynomial $\epsilon - y_1$ must be non-negative over the solution space for any $\epsilon > 0$. It is not as obvious whether or not an SoS proof of this non-negativity exists. It turns out that there is a degree two SoS proof as follows:
\begin{align}
\epsilon - y_1 &\equiv \left(\sqrt{\frac{\epsilon}{n}} - \left(\frac{n}{4\epsilon}\right)^{1/2}y_1\right)^2 + \left(\sqrt{\frac{\epsilon}{n}} - \left(\frac{n}{4\epsilon}\right)^{3/2}y_2\right)^2 + \left(\sqrt{\frac{\epsilon}{n}} - \left(\frac{n}{4\epsilon}\right)^{7/2}y_3\right)^2 + \nonumber\\
&+\dots + \left(\sqrt{\frac{\epsilon}{n}} - \left(\frac{n}{4\epsilon}\right)^{(2^n-1)/2}y_n\right)^2.\label{eq:proof}\tag{$*$}
\end{align}
where the $\equiv$ is equality modulo the ideal generated by the constraints. Of course, this proof involves coefficients of doubly-exponential size, but one can prove that they are required. We will take $\epsilon < 1/2$ for simplicity. We will define a linear functional $\phi: \R[Y]_d \rightarrow \R$ satisfying the following:
\begin{itemize}
\item $\phi[\epsilon - y_1] = -\epsilon$
\item $\phi[p^2] \geq 0$ for any $p^2$ of degree at most $d$
\item $\phi[\sigma_i(y_i^2 - y_{i+1})] = 0$ for any $i \leq n-1$ and $\sigma_i$ of degree at most $d-2$
\item $|\phi[\lambda y_n^2]| \leq (2\epsilon)^{2^{n-1}}n^d\|\lambda\|$.
\end{itemize}
If such a $\phi$ exists, then for any degree $d$ SoS proof of non-negativity
\[\epsilon - y_1 = \sum_i h_i(y)^2 + \sum_{i=1}^{n-1} \sigma_i(y_i^2 - y_{i+1}) + \lambda \cdot y_n^2,\]
apply $\phi$ to both sides. We obtain $-\epsilon \leq P + 0 + \phi[\lambda y_n^2]$, where $P \geq 0$. Because $|\phi[\lambda y_n^2]| \leq (2\epsilon)^{2^{n-1}}n^d\|\lambda\|$, $\lambda$ must contain a coefficient of size at least $\Omega(\frac{1}{n^d}\left(\frac{1}{2\epsilon}\right)^{2^n})$.

To show that such a $\phi$ exists, we define it as follows. By the constraints, every monomial is equivalent to some power of $y_1$. For example, $y_1y_2y_3 \equiv y_1^7$.  More generally, the constraints imply that $\prod_{i = 1}^n y_i^{\beta_i} = y_1^{\sum_{j = 1}^n 2^{j-1} \beta_j}$.  Define $\phi$ by,

\[\phi\left( \prod_{i = 1}^n y_i^{\beta_i}\right) = (2\epsilon)^{\sum_{i} 2^{i-1} \beta_i } \]

One can easily check that this $\phi$ satisfies the above. Note that none of the variables $y_i$ in the above system are boolean, which we achieve in the upcoming section.
 
\subsection{A Boolean System}
One simple way to try to make the system Boolean is to just add the constraints $y_i^2 = y_i$ to the system. Unfortunately, in that case it is easy to prove that $y_i - y_j = 0$ for each $i$ and $j$, and of course $y_n = y_n^2 = 0$. It is too easy for SoS to figure out what each $y_i$ should look like. Previously, the variables were unconstrained in any way, and we want to imitate that. We draw inspiration from the Knapsack problem, and we instead replace each instance of the variable $y_i$ with a sum of $2k$ Boolean variables 
\[y_i \rightarrow \sum_j w_{ij} - k,\] 
and we consider the non-negative polynomial $\epsilon - (\sum_j w_{1j} - k)$. Clearly there is a degree two proof of non-negativity for this polynomial since we can just replace each instance of $y_i$ with $\sum_j w_{ij} - k$ in (\ref{eq:proof}). 

It remains to show that there are no other proofs that have only small coefficients. Here, we use the fact that the Knapsack problem is hard for SoS: there is no SoS proof of degree less than $\Omega(k)$ that $\sum_j w_{ij} - k$ is not equal to any number $r \in (0,1)$ \cite{Grigoriev2001}. This allows us to use the Knapsack pseudodistribution to "pretend" that $\sum_j w_{ij} - k = (2\epsilon)^{2^{i-1}}$. Specifically, for each $r \in (0,1)$, there is a linear functional $\phi_r$ defined on polynomials of $2k$ Boolean variables which satisfies
\begin{itemize}
\item $\phi_r[\sigma_{ij}(w_{ij}^2 - w_{ij})] = 0$ for any $\sigma_{ij}$ up to degree $O(k)$
\item $\phi_r[\lambda\cdot((\sum_j w_{ij} - k) - r)] = 0$ for any polynomial $\lambda$ up to degree $O(k)$
\item $\phi_r[p^2] \geq 0$ for any polynomial $p^2$ of degree at most $O(k)$.
\end{itemize}
Now, take the linear functional $\Phi$ defined on each polynomials of $2kn$ variables defined in the following way: Let $T = T_1 \cup T_2 \cup \dots \cup T_n$ where $T_i$ is a multiset that contains only the variables corresponding to $y_i$, and let $w_T$ denote the associated monomial. Then define
\[\Phi[w_T] = \phi_{2\epsilon}(w_{T_1})\phi_{(2\epsilon)^2}(w_{T_2})\dots\phi_{(2\epsilon)^{2^{n-1}}}(w_{T_n}).\]
Clearly $\Phi$ is non-negative on squares and $\Phi[\sigma_{ij}(w_{ij}^2-w_{ij})] = 0$ for any $\sigma_{ij}$ up to degree $\Omega(k)$. Because $\Phi[\lambda(\sum_j w_{ij} - k)] = \Phi[(2\epsilon)^{2^{i-1}}\lambda]$, $\Phi$ also satisfies $\Phi[\lambda((\sum_j w_{ij}-k)^2 - (\sum_j w_{i+1,j} - k))] = 0$ for each $\lambda$ and $1 \leq i \leq n-1$. Finally, because each variable is Boolean, $\Phi$ of any monomial is at most one, so for any monomial $w_M$, $\Phi[w_M(\sum_j w_{nj} - k)^2] = \Phi[(2\epsilon)^{2^{n-1}} w_M] \leq (2\epsilon)^{2^{n-1}}$. There are at most $(nk)^d$ monomials, so $\Phi[\lambda(\sum_j w_{nj} - k)^2] \leq (nk)^d(2\epsilon)^{2^{n-1}}\|\lambda\|$. Just as before, the existence of $\Phi$ implies that any degree $d$ proof of non-negativity for $\epsilon - (\sum_j w_{1j} - k)$ must contain coefficients of size at least $\Omega(\frac{1}{(nk)^d} \cdot \left(\frac{1}{2\epsilon}\right)^{2^n})$. If we set $k = n$, then there are $n^2$ variables and no proof of non-negativity with coefficients smaller than doubly-exponential until degree $n$. This proves \prettyref{thm:counter}.

\section{\textsc{Max-Bisection} Constraints}\label{sec:balance}
In this section, we prove our earlier claim that the \textsc{Max-Bisection} constraints admit rich solutions. Recall the constraints:
\[\cP(n) = \left\{x_i^2 - x_i | i \in [2n]\right\} \cup \left\{\sum_i x_i - n\right\}.\]
Recall that to prove $S$ is rich, we have to prove that it is spectrally rich, robust, and complete. Since the solution space lies in the hypercube, it is spectrally rich by \prettyref{lem:integer-rich}, and it is clearly robust since $\cQ$ is empty. It remains to prove that it is complete for some $k$. This proof follows a very similar path to \cite{Braun:2016:MPN:2884435.2884510}, due to the similar symmetry of the constraints. 
\begin{lemma}
$\cP(n)$ is $d$-complete for any $d \leq n$.
\end{lemma}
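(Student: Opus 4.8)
The plan is to mirror the arguments already given for \textsc{Balanced Separator} and \textsc{Matching}: first reduce to multilinear polynomials, then understand the ideal $\langle \cP(n)\rangle$ through a dimension count in the quotient by the single linear relation $g \coloneqq \sum_{i=1}^{2n} x_i - n$. Write $S = \{\alpha \in \{0,1\}^{2n} : |\alpha| = n\}$ for the solution set, and let $\cM_k$ denote the space of multilinear polynomials of degree at most $k$, so $\dim \cM_k = \sum_{j=0}^{k}\binom{2n}{j}$. Given $r$ of degree $\le d$ such that $c^T{\bf v} = r$ is a zero eigenvector of $M_S$ (equivalently, $r$ vanishes on $S$), multilinearizing monomial by monomial using the constraints $\{x_i^2 - x_i\}$ yields a multilinear $\tilde r \in \cM_d$ with $r - \tilde r$ a degree-$\le d$ combination of the $x_i^2 - x_i$; since $S$ lies in the hypercube, $\tilde r$ also vanishes on $S$. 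So it suffices to produce a degree-$\le d$ derivation of $\tilde r$ from $\cP(n)$.

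Next I would reformulate the goal using an operator. Let $G \colon \cM_{d-1} \to \cM_d$ send $\mu$ to the multilinearization of $\mu g$; this is well defined because $\deg(\mu g) \le d$ and multilinearizing never raises degree. A degree-$\le d$ derivation $\tilde r = \mu g + \sum_i \lambda_i (x_i^2 - x_i)$ with $\deg \mu \le d-1$ is, after multilinearizing, exactly the assertion $\tilde r = G(\mu)$ (the leftover $\mu g - G(\mu)$ is absorbed into the Boolean terms without raising degree), and conversely $\tilde r = G(\mu)$ gives back such a derivation. Hence the lemma reduces to showing $K \subseteq \mathrm{image}(G)$, where $K = \ker\bigl(\mathrm{ev}\colon \cM_d \to \R^S\bigr)$ is the space of multilinear degree-$\le d$ polynomials vanishing on $S$.

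I would establish $K = \mathrm{image}(G)$ by a sandwich. First, $\mathrm{image}(G) \subseteq K$: on the hypercube $G(\mu)$ agrees with $\mu g$, which vanishes on $S$ since $g$ does. Second, $G$ is injective on $\cM_{d-1}$ when $d \le n$: if $G(\mu) = 0$ then $\mu g$ vanishes on all of $\{0,1\}^{2n}$, so $\mu$ vanishes on every point of weight $\ne n$; in particular $\mu$ vanishes on all points of weight at most $\deg\mu \le d-1 \le n-1$, and a multilinear polynomial of degree $\le k$ that vanishes on all points of weight $\le k$ must be zero, because the inclusion matrix $[\,T \subseteq U\,]_{|T|,|U|\le k}$ is block-triangular (ordering by size) with identity diagonal blocks, hence invertible. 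Therefore $\dim\mathrm{image}(G) = \dim \cM_{d-1} = \sum_{j\le d-1}\binom{2n}{j}$. Finally $\dim K = \dim \cM_d - \dim\mathrm{ev}(\cM_d) \le \sum_{j\le d}\binom{2n}{j} - \binom{2n}{d} = \sum_{j\le d-1}\binom{2n}{j}$, using the bound $\dim\mathrm{ev}(\cM_d) \ge \binom{2n}{d}$. Thus $\dim K \le \dim \mathrm{image}(G)$, and combined with $\mathrm{image}(G) \subseteq K$ this forces $K = \mathrm{image}(G)$, which completes the proof.

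The main obstacle is the inequality $\dim\mathrm{ev}(\cM_d) \ge \binom{2n}{d}$ — equivalently, that the $\binom{2n}{d}$ monomials of degree exactly $d$ remain linearly independent as functions on the middle slice $S$, i.e.\ that the inclusion matrix of $d$-subsets into $n$-subsets of $[2n]$ has full column rank for $d \le n$. This is exactly where the hypothesis $d \le n$ is genuinely used, and it is the classical rank computation for set-inclusion (Johnson-scheme) matrices; I would either cite Gottlieb's rank formula or include the short eigenvalue argument for the relevant association-scheme operators. Everything else (the multilinearization step, the triangularity argument for injectivity of $G$, and the dimension bookkeeping) is routine. As a final remark, exactly as for \textsc{Matching} in \cite{Braun:2016:MPN:2884435.2884510}, the argument is really exploiting the $\cS_{2n}$-symmetry of $\cP(n)$: functions on the slice decompose under $\cS_{2n}$ into ``levels'' $0,\dots,n$, the level filtration coincides with the degree filtration, and this is what makes the dimension counts line up.
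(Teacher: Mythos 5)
Your proof is correct, and it takes a genuinely different route from the paper's. The paper proceeds by a double induction (on degree and, implicitly, on $n$): it first derives any \emph{symmetric} polynomial vanishing on $S$ via elementary symmetric polynomial reductions, then shows that $r - \sigma_{ij}r$ factors through a smaller instance $\cP(n-1)$ using the identity $(x_i + x_j - 1)(x_i - x_j) - (x_i^2 - x_i) + (x_j^2 - x_j) = 0$, and assembles the two halves with the symmetrization trick; the whole argument is self-contained. Your argument instead reduces to multilinear polynomials and turns the claim into a clean linear-algebra statement: the space $K$ of degree-$\le d$ multilinear polynomials vanishing on the middle slice equals the image of the ``multiply by $g = \sum_i x_i - n$ and multilinearize'' operator $G : \cM_{d-1} \to \cM_d$, established by sandwiching dimensions. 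The steps you supply in full (the derivation$/G$ correspondence, injectivity of $G$ via the block-triangular inclusion matrix on low-weight points, the bookkeeping) are all correct, and the single external input you flag --- that the monomials of degree exactly $d$ remain linearly independent as functions on the slice, equivalently that the $d$-vs-$n$ inclusion matrix of subsets of $[2n]$ has rank $\binom{2n}{d}$ for $d \le n$ --- is precisely Gottlieb's rank theorem and is the load-bearing classical fact. The trade-off is instructive: your approach isolates exactly what $d \le n$ buys (full rank of $W_{d,n}$ and injectivity of $G$) and avoids the induction, but imports a nontrivial theorem; the paper's approach is elementary and self-contained but obscures the underlying representation-theoretic reason the dimensions line up, which your closing remark about the $\cS_{2n}$-level filtration correctly identifies.
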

\begin{proof}
Let $S(n)$ denote the solution space of $\cP(n)$, and let $M = \E_{\alpha \in S}[{\bf v}(\alpha){\bf v}(\alpha)^T]$. Any zero eigenvector $c$ of $M$ can be associated with a polynomial $c^T{\bf v}$. Since 
\[c^TMc = \sum_\alpha (c^T{\bf v}(\alpha))^2 = 0,\]
we must have $c^Tv(\alpha) = 0$ for each $\alpha \in S$. We argue that any degree $d$ polynomial which is identically zero on $S(n)$ must have a degree $d$ derivation from $\cP(n)$. 

We proceed by induction on $d$. If $d = 0$, the only constant polynomial zero on $S(n)$ is the zero polynomial, which has the trivial derivation. Now consider the case of $d = c+1$. We proceed in two parts. First, if $r$ is fully symmetric, we show that it has a degree $d$ derivation. Secondly, for any polynomial $p$ which is zero on $S(n)$, we prove that $p - \frac{1}{(2n)!}\sum_{\sigma \in \cS_n} \sigma p$ has a degree $d$ derivation from $\cP$, where $\sigma$ acts on $p$ by permuting the labels of the variables. Taken together, these two facts imply that $r$ has a degree $d$ derivation from $\cP(n)$.

To prove the first part, note that a symmetric polynomial $r$ is a linear combination of the elementary symmetric polynomials $e_1,\dots,e_c$, and it is clear that $e_k(x)$ can be derived by taking the polynomial $(\sum_i x_i - n)^k$, reducing it to multilinear using the boolean constraints, and then reducing by $e_l(x)$ for each $l < k$. This will result in a constant polynomial, which must be the zero polynomial since we are only adding polynomials which are zero on $S(n)$, so the resulting polynomial must be zero on $S(n)$. 

To prove the second part, let $\sigma_{ij}$ be the transposition of labels $i$ and $j$, and consider the polynomial $r - \sigma_{ij}r$. Writing $r = r_ix_i + r_jx_j + r_{ij}x_ix_j + q_{ij}$, where none of $r_i$,$r_j$,$r_{ij}$, nor $q_{ij}$ depend on $x_i$ or $x_j$, we can rewrite
\[r - \sigma_{ij}r = (r_i - r_j)(x_i - x_j).\]
Now because $r - \sigma_{ij}r$ evalutes to zero on any boolean string with exactly $n$ ones, if we set $x_i = 1$ and $x_j = 0$, we know that $r_i - r_j$ is a polynomial that must evaluate to zero on any boolean string with exactly $n-1$ ones. Because $\deg (r_i - r_j) = d-1$, by the inductive hypothesis, $r_i - r_j$ has a degree $d-1$ proof from $\cP(n-1)$ (since $d \leq n$, clearly $d-1 \leq n-1$). This implies that $(r_i - r_j)(x_i - x_j)$ has a degree $d-1$ proof from $\cP(n)$:
\begin{align*}
(r_i - r_j)(x_i - x_j) &= \left[\sum_{t \neq i,j} \lambda_t\cdot (x_t^2 - x_t) + \lambda \cdot \left(\sum_{t \neq i,j} x_t - (n-1)\right)\right](x_i - x_j) \\
&= \sum_{t} \lambda'_t \cdot (x_t^2 - x_t) + \lambda \cdot \left(\sum_{t \neq i,j} x_t - (n-1) + (x_i + x_j - 1)\right)(x_i - x_j)\\
&= \sum_t \lambda'_t \cdot (x_t^2 - x_t) +\lambda' \cdot \left(\sum_t x_t - n\right)
\end{align*}
where we used the fact that $(x_i + x_j - 1)(x_i - x_j) - (x_i^2 - x_i) + (x_j^2 - x_j) = 0$. The degree of this derivation is at most $d$ because each $\lambda_t$ has degree at most $d-3$, and $\lambda'_t = \lambda_t(x_i - x_j)$, and similarly for $\lambda$. Thus the inductive hypothesis implies that $r - \sigma_{ij}r$ has a degree $d$ derivation, and since transpositions generate the symmetric group, this implies that $r - \frac{1}{(2n)!}\sum_{\sigma \in \cS_n} \sigma r$ has a degree $d$ proof from $\cP(n)$.
\end{proof}

\begin{remark}
In this example, $\cP$ is not a Gr\"obner basis for its ideal $\langle \cP \rangle$. 
Indeed, the Gr\"obner basis for this ideal has exponential size. This is an example where our framework is applicable, even though Gr\"obner bases are intractable to compute. 
\end{remark}


\addreferencesection
\bibliographystyle{amsalpha}
\bibliography{files/writeup,files/pc}

\appendix

\end{document}
